\newtheorem{definition}{Definition}
\newtheorem{proposition}[definition]{Proposition}
\newtheorem{lemma}[definition]{Lemma}
\newtheorem{theorem}[definition]{Theorem}
\newtheorem{corollary}[definition]{Corollary}
\newtheorem{conjecture}[definition]{Conjecture}
\newtheorem{remark}[definition]{Remark}
\def\squareforqed{\hbox{\rlap{$\sqcap$}$\sqcup$}}
\def\qed{\ifmmode\squareforqed\else{\unskip\nobreak\hfil
\penalty50\hskip1em\null\nobreak\hfil\squareforqed
\parfillskip=0pt\finalhyphendemerits=0\endgraf}\fi}
\def\endenv{\ifmmode\;\else{\unskip\nobreak\hfil
\penalty50\hskip1em\null\nobreak\hfil\;
\parfillskip=0pt\finalhyphendemerits=0\endgraf}\fi}
\newenvironment{proof}{\noindent \textbf{{Proof.~} }}{\qed}
\def\bcj{\begin{conjecture}}
\def\ecj{\end{conjecture}}
\def\bcr{\begin{corollary}}
\def\ecr{\end{corollary}}
\def\bd{\begin{definition}}
\def\ed{\end{definition}}
\def\bea{\begin{eqnarray}}
\def\eea{\end{eqnarray}}
\def\bem{\begin{enumerate}}
\def\eem{\end{enumerate}}
\def\bim{\begin{itemize}}
\def\eim{\end{itemize}}
\def\bl{\begin{lemma}}
\def\el{\end{lemma}}
\def\bpf{\begin{proof}}
\def\epf{\end{proof}}
\def\bpp{\begin{proposition}}
\def\epp{\end{proposition}}
\def\br{\begin{remark}}
\def\er{\end{remark}}
\def\bt{\begin{theorem}}
\def\et{\end{theorem}}
\def\h{\eta}
\def\t{\theta}
\def\r{\rho}
\def\ph{\varphi}
\def\ps{\psi}
\def\Ps{\Psi}
\newcommand{\nc}{\newcommand}
\nc{\cA}{{\cal A}} \nc{\cB}{{\cal B}} \nc{\cC}{{\cal C}}
\nc{\cD}{{\cal D}} \nc{\cE}{{\cal E}} \nc{\cF}{{\cal F}}
\nc{\cG}{{\cal G}} \nc{\cH}{{\cal H}} \nc{\cI}{{\cal I}}
\nc{\cJ}{{\cal J}} \nc{\cK}{{\cal K}} \nc{\cL}{{\cal L}}
\nc{\cM}{{\cal M}} \nc{\cN}{{\cal N}} \nc{\cO}{{\cal O}}
\nc{\cP}{{\cal P}} \nc{\cR}{{\cal R}} \nc{\cS}{{\cal S}}
\nc{\cT}{{\cal T}} \nc{\cX}{{\cal X}} \nc{\cZ}{{\cal Z}}
\def\ghz{\mathop{\rm GHZ}}
\def\rk{\mathop{\rm rk}}
\def\tr{\mathop{\rm Tr}}
\def\w{\mathop{\rm W}}
\def\bigox{\bigotimes}
\def\dg{\dagger}
\def\ox{\otimes}
\def\ra{\rightarrow}
\def\Ra{\Rightarrow}
\newcommand{\ket}[1]{|#1\rangle}
\newcommand{\proj}[1]{| #1\rangle\!\langle #1 |}
\newcommand{\ketbra}[2]{|#1\rangle\!\langle#2|}
\newcommand{\jmp}{J. Math. Phys.}
\begin{document}
\title{Multi-copy and stochastic transformation of multipartite pure states}

\author{Lin Chen}
\email{cqtcl@nus.edu.sg (Corresponding~Author)}
\author{Masahito Hayashi$^{2,}$}
\email{hayashi@math.is.tohoku.ac.jp}
\address{Centre for Quantum Technologies, National University of
Singapore, 3 Science Drive 2, 117542, Singapore\\
$^2$ Graduate School of Information Sciences, Tohoku
University, Aoba-ku, Sendai, 980-8579, Japan}

\begin{abstract}
Characterizing the transformation and classification of multipartite
entangled states is a basic problem in quantum information. We study
the problem under two most common environments, local operations and
classical communications (LOCC), stochastic LOCC and two more
general environments, multi-copy LOCC (MCLOCC) and multi-copy SLOCC
(MCSLOCC). We show that two transformable multipartite states under
LOCC or SLOCC are also transformable under MCLOCC and MCSLOCC.
What's more, these two environments are equivalent in the sense that
two transformable states under MCLOCC are also transformable under
MCSLOCC, and vice versa. Based on these environments we classify the
multipartite pure states into a few inequivalent sets and orbits,
between which we build the partial order to decide their
transformation. In particular, we investigate the structure of
SLOCC-equivalent states in terms of tensor rank, which is known as
the generalized Schmidt rank. Given the tensor rank, we show that
GHZ states can be used to generate all states with a smaller or
equivalent tensor rank under SLOCC, and all reduced separable states
with a cardinality smaller or equivalent than the tensor rank under
LOCC. Using these concepts, we extended the concept of "maximally
entangled state" in the multi-partite system.

\end{abstract}

\date{\today}

\pacs{03.65.Ud, 03.67.Mn, 03.67.-a}

\maketitle

\section{\label{sec:introduction} introduction}

%

One of the main tasks in quantum information theory is to find out
how many different ways there exist, in which several spatially
distributed objects could be entangled under certain prior
environments with restricted physical resource. For example, we
often use local operations assisted with classical communication
(LOCC) with which one can obtain quantum resources definitely. This
is a key condition to transform and prepare different multipartite
states from each other, which are the basic ingredient in
quantum-information tasks, such as GHZ states in quantum
teleportation \cite{bbc93} and graph states in quantum computation
\cite{br01}. It is known that bipartite pure entangled states are
interconvertible in the asymptotic LOCC transformations
\cite{bbp96}. In other words, all bipartite pure entangled states
can be used to perform the same quantum-information task in the
asymptotic regime.

On the other hand, the problem becomes complex for multipartite
states. For example, it is known that two pure states are
transformable from each other with certainty, namely
\emph{equivalent} if and only if they are related by local unitary
operation (LU), i.e., $\ket{\ps} = \bigox_i U_i \ket{\ph}$ where
$U_i$ are unitary operators \cite{bpr00}. However in this way we can
remove only very few parameters that characterize the system. As a
result, we have to compare quantum states with an exponentially
increasing number of parameters and it is thus unlikely to get a
clear judgement of their convertibility under LOCC. To reduce the
difficulty, Bennett et al \cite{bpr00} introduced another
environment, namely the conversion of states through stochastic LOCC
(SLOCC) with a nonzero success probability. Two pure states of a
multipartite system are equivalent under SLOCC if and only if they
are related by an invertible local operator, i.e., $\ket{\ps} =
\bigox_i A_i \ket{\ph}$ where $A_i$ are invertible operators
\cite{dvc00}. For instance, concerning the three-qubit fully
entangled states which contains no factorized systems under SLOCC,
there exist only two different families namely the
Greenberger-Horne-Zeilinger (GHZ) state \cite{ghz89} and the W state
\cite{dvc00}. In principle, the total classification of $2\times M
\times N$ states under SLOCC is realizable in terms of
hyper-determinant \cite{miyake03}, range criterion \cite{ccm06} and
matrix pencil~\cite{cmy10}. Furthermore, with SLOCC researchers have
also addressed the behavior of multiqubit states, which are the
fundamental resource in quantum computation and communications. In
addition, the SLOCC environment is also a key technique to a few
other important quantum information tasks, such as entanglement
distillation~\cite{bds96}.

In this paper we generalize the above two environments, which focus
on one-copy transformation of quantum states. We propose the
multi-copy LOCC (MCLOCC) and multi-copy SLOCC (MCSLOCC) in the sense
that one can use many copies of states to produce a target state
under LOCC and SLOCC, respectively \cite{brs02}. In fact, when a
state $|\varphi\rangle$ can be transformed to $|\psi\rangle$ by
SLOCC, we need to prepare many copies of $|\varphi\rangle$ for
generating the state $|\psi\rangle$. So, from the operational
viewpoint, it is natural to consider the convertibility from plural
copies of the given state $|\varphi\rangle$ For example, the
transformation between bipartite pure states is subjective to
Nielsen's majorization theorem~\cite{nielsen99}, and states
violating the theorem are usually not inter-convertible. However as
already mentioned~\cite{bbp96}, we can produce any target state with
enough copy of input states. The fact indicates that there is an
essential difference between one-copy and multi-copy transformation.
This is important for various quantum-information tasks which may
require different kinds of states as the input states. By studying
MCLOCC and MCSLOCC, we can more effectively make use of the quantum
resources. In particular, we will show that the two environments are
equivalent in the sense that two transformable states under MCLOCC
are also transformable under MCSLOCC, and vice versa. This is
different from the relation between LOCC and SLOCC environment,
where two equivalent states under SLOCC are often inequivalent under
LOCC.

Based on LOCC, SLOCC and MCLOCC environments, we build a hierarchy
to classify the multipartite pure states. We show that two
transformable states under LOCC are also transformable under SLOCC,
and further transformable under MCLOCC. In this sense, the LOCC
environment is superior to other two environments. To characterize
the transformation under SLOCC, we introduce the concept of tensor
rank, which is also known as the Schmidt measure of
entanglement~\cite{eb01}. As tensor rank is an invariant under
invertible SLOCC, in terms of it and the local ranks of reduced
density operators we can further characterize the structure of
multipartite states under SLOCC.

We also propose the concept of \emph{maximally} entangled states in
under a given environment, in the sense that it can be used to
generate all states in a set or orbit. First, we will show that any
fully entangled $n$-partite pure state is maximal concerning MCLOCC
among all $n$-partite pure states. Furthermore, for a given tensor
rank, any state SLOCC-equivalent to GHZ states is a maximally
entangled state concerning SLOCC in the set of states with the
tensor rank. Third, we introduce the concept cardinality to describe
the smallest number of product states forming the convex sum of a
separable state, whose purification we name as the reduced separable
pure states. Then we show that the GHZ state with rank $d$ is
maximal concerning LOCC among reduced separable pure states with the
largest cardinality $d$.



The rest of the paper is organized as follows. In
Sec.~\ref{sec:definition} we define the basic environment for state
transformation including multi-copy LOCC (MCLOCC), SLOCC and LOCC.
We also establish their relations such as partial order and
equivalence. In Sec.~\ref{sec:classificationMCLOCC},
\ref{sec:classificationSLOCC}, and \ref{sec:classificationLOCC} we
classify the multipartite pure states by three elementary steps in
terms of MCLOCC, SLOCC and LOCC, respectively. We conclude in
Sec.~\ref{sec:conclusion}.

\section{\label{sec:definition} definitions and general classification}

We start by defining a few useful criterion for one-way
transformation between multipartite pure states.

 \bd
We denote $|\psi_1\rangle \succ_{LOCC} |\psi_2\rangle$ when there
exists LOCC that transforms $|\psi_1\rangle$ to $|\psi_2\rangle$. We
denote $|\psi_1\rangle \succ_{SLOCC} |\psi_2\rangle$ when there
exists matrices $A_1,\ldots, A_n$ such that $A_1\otimes \ldots
\otimes A_n |\psi_1\rangle = |\psi_2\rangle$.
 \ed

 \bd
We denote $|\psi_1\rangle \succ_{MCLOCC} |\psi_2\rangle$ (multi-copy
LOCC) when there exists integer $n$ such that
$|\psi_1\rangle^{\otimes n} \succ_{LOCC} |\psi_2\rangle$. We denote
$|\psi_1\rangle \succ_{MCSLOCC} |\psi_2\rangle$ (multi-copy SLOCC)
when there exists integer $n$ such that $|\psi_1\rangle^{\otimes n}
\succ_{SLOCC} |\psi_2\rangle$.
 \ed

Trivially, when $|\psi_1\rangle \succ_{LOCC} |\psi_2\rangle$, we
have $|\psi_1\rangle \succ_{SLOCC} |\psi_2\rangle$. Although the
converse does not generally hold (e.g., $\cos \t \ket{00} + \sin \t
\ket{11} \succ_{SLOCC} (\ket{00} + \ket{11})/\sqrt2$ but the
transformation $\cos \t \ket{00} + \sin \t \ket{11} \succ_{LOCC}
(\ket{00} + \ket{11})/\sqrt2$ is prohibited by the majorization
criterion \cite{nielsen99}), we have a further result that when
$|\psi_1\rangle \succ_{SLOCC} |\psi_2\rangle$, $|\psi_1\rangle
\succ_{MCLOCC} |\psi_2\rangle$. This is because when there exists
POVM elements $A_1, \cdots, A_n$ which can be used to do the
transformation $A_1\otimes \ldots \otimes A_n |\psi_1\rangle =
|\psi_2\rangle$ with nonzero probability $p>0$. Then we can repeat
the measurement many times until we succeed with the probability
$1-(1-p)^n,~ n \ra \infty$. In this sense, we can deterministically
transform the state $|\psi_1\rangle$ into $|\psi_2\rangle$ with
sufficiently large $n$.

However, the converse does not hold in general. That is, there are
states $\ket{\ps_1}, \ket{\ps_2}$ such that when $|\psi_1\rangle
\succ_{MCLOCC} |\psi_2\rangle$, but not $|\psi_1\rangle
\succ_{SLOCC} |\psi_2\rangle$. A typical example is the tripartite
GHZ and W state
 \bea
 \ket{\ghz}
 &=&
 \frac{1}{\sqrt2} \bigg( \ket{000} + \ket{111} \bigg),
 \nonumber\\
 \ket{\w}
 &=&
 \frac{1}{\sqrt3} \bigg( \ket{001} + \ket{010} + \ket{100} \bigg),
 \eea
which are known to be inequivalent under reversible SLOCC
\cite{dvc00}. However, it is a necessary and sufficient condition
that two-copy GHZ states are able to generate one W state under LOCC
\cite{ds09}. That is,
$$|\ghz\rangle^{\otimes 2} \succ_{LOCC}
|\w\rangle.$$

Next, we can characterize the relation between the environment of
MCLOCC and MCSLOCC. By using the same argument for deriving
$|\psi_1\rangle \succ_{SLOCC} |\psi_2\rangle ~\Ra~ |\psi_1\rangle
\succ_{MCLOCC} |\psi_2\rangle$, we can get $\ket{\ps_1}
\succ_{MCSLOCC} \ket{\ps_2} ~\Ra~ \ket{\ps_1} \succ_{MCLOCC}
\ket{\ps_2}$. On the other hand, suppose we have $\ket{\ps_1}
\succ_{MCLOCC} \ket{\ps_2}$. Generally we can suppose $\ket{\ps_1}
=\bigox_i \ket{\ph_{i1}},~\ket{\ph_{i1}} \in \cH_i$ are fully
entangled. According to the definition of $MCLOCC$, we can get the
expression $\ket{\ps_2} =\bigox_i \ket{\ph_{i2}},~\ket{\ph_{i2}} \in
\cH_i$. In other words we have
$$\ket{\ps_1} \succ_{MCLOCC} \ket{\ps_2}
~\Ra~ \ket{\ph_{i1}} \succ_{MCLOCC} \ket{\ph_{i2}}$$. Let $\cH_i =
\bigox^{n_i}_{j=1} \cH_{i,j}$, we can use the states
$\ket{\ph_{i1}}$ to get bell states in the bipartite space
$\cH_{i,1} \ox \cH_{i,j}, j=2, \cdots, n_i$ under SLOCC. So we can
use the bell states to generate an arbitrary state $\ket{\ph_{i2}}$
by teleportation. Hence, we get
 \bea
 \ket{\ph_{i1}} \succ_{MCLOCC} \ket{\ph_{i2}}
 &\Ra&
 \ket{\ph_{i1}} \succ_{MCSLOCC} \ket{\ph_{i2}}
 \nonumber\\
 &\Ra&
 \ket{\ps_1} \succ_{MCSLOCC} \ket{\ps_2}.
 \eea
As a short summary of the above arguments, we have
 \bt
 \label{thm:4definitions,oneway}
 For two $n$-partite pure states $\ket{\ps_1}$ and $\ket{\ps_2}$, we have
 \bea
 \ket{\ps_1} &\succ_{LOCC}& \ket{\ps_2}
 \nonumber\\
 &\Downarrow&
 \nonumber\\
 \ket{\ps_1} &\succ_{SLOCC}& \ket{\ps_2}
 \nonumber\\
 &\Downarrow&
 \nonumber\\
 \ket{\ps_1} &\succ_{MCLOCC}& \ket{\ps_2}
 \nonumber\\
 &\Updownarrow&
 \nonumber\\
 \ket{\ps_1} &\succ_{MCSLOCC}& \ket{\ps_2}.
 \eea
 \et

It is noticeable that in the four definitions of
Theorem~\ref{thm:4definitions,oneway},
 only the definition $ \ket{\ps_1} \succ_{MCLOCC} \ket{\ps_2} $ is
 asymptotic while other three are deterministic. In other words, under MCLOCC we
 obtain a state $\r$ such that the fidelity $F(\r, \ket{\ps_2}) \simeq
 1$ when the number of copies $n \ra \infty$. However in all other
 three definitions, we are required to get an exact state
 $\ket{\ps_2}$ from $\ket{\ps_1}^{\ox n}$ with a finite $n$.

Because of Theorem~\ref{thm:4definitions,oneway}, we will use the
definition $\ket{\ps_1} \succ_{MCLOCC} \ket{\ps_2}$ which is equal
to $\ket{\ps_1} \succ_{MCSLOCC} \ket{\ps_2}$ from now on. Based on
the results of one-way transformation, we study the relations under
two-way (invertible) transformation.

 \bd
 \label{def:mcloccequiv}
We denote $|\psi_1\rangle \cong_{MCLOCC} |\psi_2\rangle$, when
$|\psi_1\rangle \succ_{MCLOCC} |\psi_2\rangle$ and $|\psi_2\rangle
\succ_{MCLOCC} |\psi_1\rangle$. So $|\psi_1\rangle $ is called
MCLOCC-equivalent to $|\psi_2\rangle$. We denote the set of states
$|\psi_2\rangle$ by ${\cal O}_{MCLOCC}(|\psi_1\rangle)$, namely the
MCLOCC orbit of state $|\psi_1\rangle$.
 \ed

 \bd
 \label{def:sloccequiv}
Similarly, we denote $|\psi_1\rangle \cong_{SLOCC} |\psi_2\rangle$.
So $|\psi_1\rangle $ is called SLOCC-equivalent to $|\psi_2\rangle$.
We denote as the SLOCC orbit $\cO_{SLOCC}(\ket{\ps_2})$, which is
known to consist of states $\ket{\ps_1} = \bigox_i A_i \ket{\ps_2}$
where $A_i$ are invertible \cite{dvc00}.
 \ed

 \bd
 \label{def:loccequiv}
Similarly, we denote $|\psi_1\rangle \cong_{LOCC} |\psi_2\rangle$.
So $|\psi_1\rangle $ is called LOCC-equivalent to $|\psi_2\rangle$.
We denote as the LOCC orbit $\cO_{LOCC}(\ket{\ps_2})$, which is
known to consist of states $\ket{\ps_1} = \bigox_i U_i \ket{\ps_2}$
where $U_i$ are unitary \cite{bpr00}.
 \ed

Based on these definitions, we can generalize
Theorem~\ref{thm:4definitions,oneway} to the two-way (invertible) case.

 \bt
 \label{thm:4definitions,twoway}
 For two $n$-partite pure states $\ket{\ps_1}$ and $\ket{\ps_2}$, we have
 \bea
 \ket{\ps_1} &\cong_{LOCC}& \ket{\ps_2}
 \nonumber\\
 &\Downarrow&
 \nonumber\\
 \ket{\ps_1} &\cong_{SLOCC}& \ket{\ps_2}
 \nonumber\\
 &\Downarrow&
 \nonumber\\
 \ket{\ps_1} &\cong_{MCLOCC}& \ket{\ps_2}
 \nonumber\\
 &\Updownarrow&
 \nonumber\\
 \ket{\ps_1} &\cong_{MCSLOCC}& \ket{\ps_2}.
 \eea
 \et

Similarly, we will use the definition $\ket{\ps_1} \cong_{MCLOCC}
\ket{\ps_2}$ which is equal to $\ket{\ps_1} \cong_{MCSLOCC}
\ket{\ps_2}$ from now on. We say that two states/orbits are
\textit{X incomparable}, namely $\ket{\ps_1} \ncong_X \ket{\ps_2}$
when the partial order $\succ_X$ does not hold for both directions.

\section{\label{sec:classificationMCLOCC} Classification of multipartite pure states under MCLOCC}

To show the power of definitions for multipartite pure states in
last section, we classify the multipartite pure states in terms of
Theorem~\ref{thm:4definitions,twoway}. Following the three inclusion
relations there, we use three corresponding \emph{elementary} steps
as our strategy. That is,
 \bem
\item
 First, we divide multipartite pure states into a few inequivalent MCLOCC orbits
 and clarify the three partial orders under MCLOCC, SLOCC, and LOCC between these orbits.
\item
 Second, we divide each MCLOCC orbit into a few inequivalent SLOCC
 sets, orbits and clarify their partial order under SLOCC.
\item
 Third, we give a few examples from the SLOCC sets and orbits and study their partial order under LOCC.
 \eem
These steps will be carried out in
Sec.~\ref{sec:classificationMCLOCC}, ~\ref{sec:classificationSLOCC}
and~\ref{sec:classificationLOCC}, respectively.

Let us carry out elementary step 1 which is relatively easy to
finish. For this purpose, we define the concept of the independence
for a given pure state $\ket{\psi}$ on the multipartite system $A_1,
A_2, \ldots ,A_n$. Two systems $A_1$ and $A_2$ are {\it independent}
for a pure multipartite state $\ket{\psi}$ when the density matrix
on the composite system $A$ and $B$ has the form $\rho_A \otimes
\rho_B$. In particular, the two systems $A_1$ and $A_2$ are {\it
completely independent} for $\ket{\psi}$ when the two systems $A_1$
and $A_2$ are independent for any state in the orbit
$\cO_{SLOCC}(\ket{\psi})$.
Two parties $A_1$ and $A_2$ are completely independent for $\ket{\psi}$
if and only if
there are two distinct groups $\{A_{a_1}, \ldots, A_{a_k}\}$ and
$\{A_{b_1}, \ldots, A_{b_l}\}$ among $A_3,\ldots, A_n$
and two pure states
$\ket{\psi_1}$ on $A_1, A_{a_1}, \ldots, A_{a_k}$
and $\ket{\psi_2}$ on $A_2, A_{b_1}, \ldots, A_{b_l}$
such that $\ket{\psi}=\ket{\psi_1}\otimes \ket{\psi_2}$.
Using this relation, we can find that
two systems $A_1$ and $A_2$ are completely independent for a state $\ket{\psi}$
if and only if two systems $A_1$ and $A_2$ are
completely independent for the state $\ket{\psi}^{\otimes n}$.

For a given multipartite state $\ket{\psi}$, we define a graph to
connect two parties that are not independent. We denote such a graph
by $G(\ket{\psi})$. For example, when $\ket{\psi}$ is
$\ket{0000}+\ket{0110}+\ket{1200}+\ket{1310}+\ket{0001}+\ket{0111}+\ket{1201}+\ket{1311}$,
the party $A_1$ is independent of the parties $A_3$ and $A_4$, the
party $A_2$ is independent of the party $A_4$, the party $A_3$ is
independent of the parties $A_1$ and $A_4$, and the party $A_4$ is
independent of the parties $A_1$, $A_2$, and $A_3$. However, the
parties $A_1$ and $A_3$ are not completely independent. We describe
the graph $G(\ket{\psi})$ explicitly in Fig. \ref{fig}.

\begin{figure}
  \includegraphics[width=8cm]{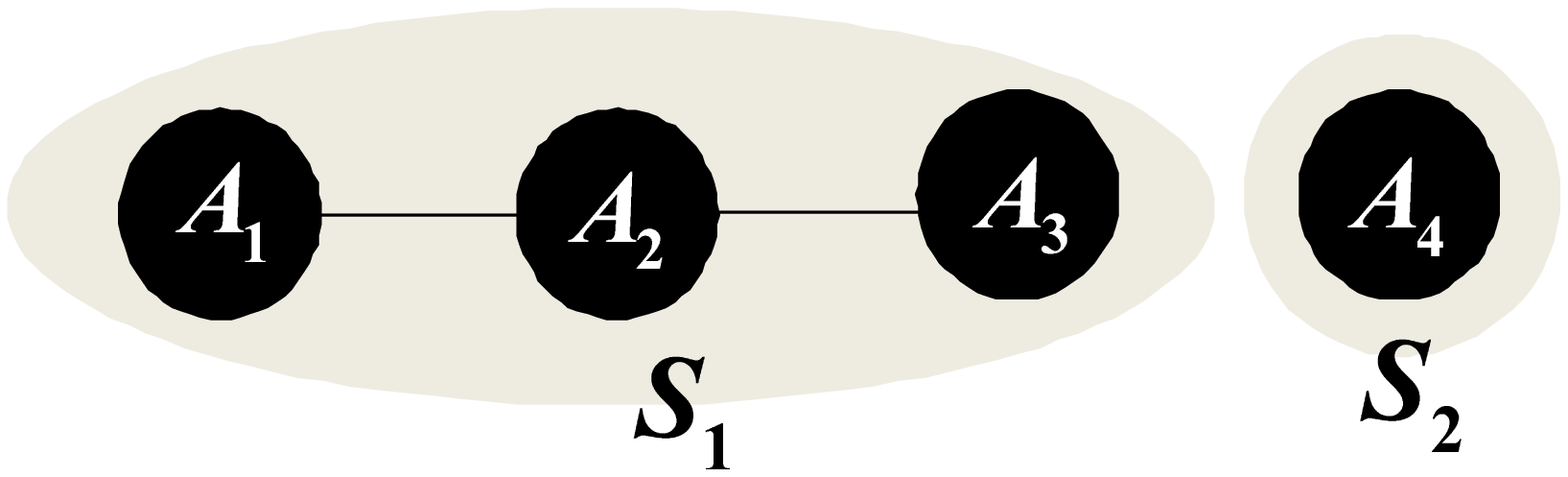}\\
  \caption{\label{fig}
Graph $G(\ket{\psi})$ when $\ket{\psi}$ is
$\ket{0000}+\ket{0110}+\ket{1200}+\ket{1310}+\ket{0001}+\ket{0111}+\ket{1201}+\ket{1311}$.}
\end{figure}

Using this graph, we obtain a partition of the parties $A_1,\ldots,
A_n$ and denote it by $\{S_1(\ket{\psi}),\ldots, S_k(\ket{\psi})\}$.
That is, $S_1(\ket{\psi}),\ldots, S_k(\ket{\psi})$ are subsets of
$\{A_1,\ldots, A_n\}$ and $\cup_{i=1}^k
S_i(\ket{\psi})=\{A_1,\ldots, A_n\}$. Besides, $S_i(\ket{\psi}) \cap
S_j(\ket{\psi})$ is empty when $i \neq j$. For any two parties $A_1$
and $A_2$ in $S_i(\ket{\psi})$, $A_1$ connects $A_2$ in the graph
$G(\ket{\psi})$, or there exist parties $A_{c_1}, \ldots, A_{c_l}$
in $S_i(\ket{\psi})$ such that $A_1$ connects $A_{c_1}$, $A_2$
connects $A_{c_l}$, and $A_{c_i}$ connects $A_{c_{i+1}}$ in the
graph $G(\ket{\psi})$. In the former case, we call $A_1$ connects
$A_2$ in the graph $G(\ket{\psi})$ directly while in the later case,
we call $A_1$ connects $A_2$ in the graph $G(\ket{\psi})$
indirectly. So, in other words, all elements of $S_i(\ket{\psi})$
are connected with each other at least indirectly. Further, any
element of $S_i(\ket{\psi})$ is not connected with any element of
$S_j(\ket{\psi})$ even indirectly when $i \neq j$.

In order to characterize the partition of the set $\{A_1,\ldots,
A_n\}$, we define the partial order among partitions. For two given
partitions of the set $\{A_1,\ldots, A_n\}$, $\{S_1,\ldots, S_k\}$
and $\{S_1',\ldots, S_l'\}$, we denote $\{S_1,\ldots, S_k\}\succ
\{S_1',\ldots, S_l'\}$ when any subset $S'_i$ in $\{S_1',\ldots,
S_l'\}$ belongs to a subset $S_i$ in $\{S_1,\ldots, S_k\}$. That is,
the partition $\{S_1',\ldots, S_l'\}$ is smaller than the partition
$\{S_1,\ldots, S_k\}$.

Since the pure state $\ket{\psi}$ has the form $\bigox_i \ket{\ps_i}_{S_i(\ket{\psi})}$,
we can classify multipartite states by using the partition of
the set $\{A_1,\ldots, A_n\}$.
\begin{align}
\cO(S_1,\ldots, S_k):=
\{
\ket{\psi}|
\{S_1(\ket{\psi}),\ldots, S_k(\ket{\psi})\}
=\{S_1,\ldots, S_k\}
\}.
\end{align}
From the above discussion,
we can show the "only if" part of the following theorem.

\begin{theorem}\label{hat1}
Any state in
$\cO(S_1',\ldots, S_l')$
can be generated by MCLOCC from
any state of
$\cO(S_1,\ldots, S_k)$
if and only if the relation $\{S_1,\ldots, S_k\}\succ \{S_1',\ldots, S_l'\}$ holds.
\end{theorem}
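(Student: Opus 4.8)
The plan is to reduce everything to the block decomposition $\ket{\psi}=\bigox_i \ket{\psi_i}_{S_i}$ that is already available, to treat each connected block $S_i$ as an independent arena, and to combine this with the equivalence $\succ_{MCLOCC}\,=\,\succ_{MCSLOCC}$ from Theorem~\ref{thm:4definitions,oneway}. On each block the factor $\ket{\psi_i}$ is fully entangled (its graph is a single connected component), and the engine in both directions is the invariance of complete independence under MCLOCC, together with the fact — supplied by the Bell-pair/teleportation argument used to prove Theorem~\ref{thm:4definitions,oneway} — that a fully entangled state on a set of parties can generate an arbitrary state on those parties by MCLOCC.

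First I would prove the ``if'' direction (refinement $\Rightarrow$ generation). Assume $\{S_1,\ldots,S_k\}\succ\{S_1',\ldots,S_l'\}$, so every primed block $S_j'$ sits inside a unique source block $S_{\sigma(j)}$. Fix arbitrary $\ket{\psi}\in\cO(S_1,\ldots,S_k)$ and $\ket{\chi}\in\cO(S_1',\ldots,S_l')$, and write $\ket{\psi}=\bigox_i\ket{\psi_i}_{S_i}$ and $\ket{\chi}=\bigox_j\ket{\chi_j}_{S_j'}$. Grouping the target factors by their containing source block recasts $\ket{\chi}$ as $\bigox_i\big(\bigox_{S_j'\subseteq S_i}\ket{\chi_j}\big)_{S_i}$, i.e.\ a definite target state $\ket{\theta_i}$ on each $S_i$. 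Since $\ket{\psi_i}$ is fully entangled on $S_i$, I extract, under SLOCC (hence from finitely many copies under MCLOCC), a Bell pair between a fixed reference party of $S_i$ and each other party of $S_i$, and then teleport $\ket{\theta_i}$ into place exactly as in the proof of Theorem~\ref{thm:4definitions,oneway}. Running these blockwise protocols in parallel on a common number of copies yields $\ket{\psi}\succ_{MCLOCC}\ket{\chi}$.

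Next I would prove the ``only if'' direction (generation $\Rightarrow$ refinement) by contraposition, which is essentially what the preceding discussion on complete independence supplies. If $\{S_1,\ldots,S_k\}\not\succ\{S_1',\ldots,S_l'\}$ then some primed block $S_j'$ meets two distinct source blocks, so it contains parties $A\in S_a$ and $B\in S_b$ with $a\neq b$. Consider the bipartition $P=S_a$ versus $Q=\{A_1,\ldots,A_n\}\setminus S_a$: every source state factorizes as $\ket{\psi_a}_{S_a}\otimes(\bigox_{i\neq a}\ket{\psi_i})_Q$ and so is a product across $P\vert Q$, whereas every target state has $A$ and $B$ in one connected component and is therefore not completely independent, hence entangled across $P\vert Q$. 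Because LOCC among the $n$ parties is in particular LOCC across the coarser cut $P\vert Q$, it cannot create entanglement there; thus $\ket{\psi}^{\otimes n}$, being a product across $P\vert Q$, stays separable across $P\vert Q$ under any MCLOCC protocol, and by closedness of the separable set no state entangled across $P\vert Q$ can be reached even with fidelity tending to $1$. This rules out generating the target and completes the contrapositive.

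The main obstacle is the ``if'' direction, and within it the assertion that a fully entangled state on a block is MCLOCC-maximal. The delicate point is securing a genuine Bell pair between the reference party and \emph{every} other party of a connected block: connectivity alone gives only pairwise non-independence along paths, so I must argue that a single connected pure block has Schmidt rank at least $2$ across each relevant cut and that these singlets can be distilled simultaneously by SLOCC and then used to teleport the prescribed $\ket{\theta_i}$. The secondary point needing care is the asymptotic nature of MCLOCC: the closedness/continuity step in the ``only if'' part must be invoked to forbid reaching an entangled-across-$P\vert Q$ pure state as a fidelity-$1$ limit of states that are separable across $P\vert Q$.
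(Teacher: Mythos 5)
Your proposal is correct in architecture and reaches the theorem, but the ``if'' direction takes a genuinely different route from the paper's. The paper never builds a star of Bell pairs centered at a reference party: it distills Bell pairs only along \emph{edges} of the connectivity graph $G(\ket{\psi})$ (Lemma~\ref{lh2a}), fuses them along the connected block into $\ket{\ghz_{2:A_{a_1},\ldots,A_{a_k}}}$ by the merging measurement of Lemma~\ref{lh1a}, notes that $m$ copies of $\ghz_2$ on the block constitute $\ghz_{2^m}$ under local grouping, and then invokes its tensor-rank lemma ($\ket{\ghz_d}$ reaches any state of tensor rank at most $d$ by SLOCC) to produce the target factor; your blockwise grouping of the target matches Eq.~(\ref{haeq1}). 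This structure sidesteps exactly the point you flag as delicate: the paper never needs a Bell pair between two \emph{non-adjacent} parties, so it never needs your claim that the block has Schmidt rank at least $2$ ``across each relevant cut'' --- which, as stated, is not the right tool anyway, since entanglement across the cut $\{X\}$ versus the rest does not by itself yield a Bell pair shared between the reference $R$ and the specific party $X$ (their pairwise reduced state may well be a product even inside a connected block). The standard completion of your route is entanglement swapping: distill Bell pairs along the edges of a spanning tree of $G(\ket{\psi})$, using a fresh copy of the block state per edge so that ``simultaneity'' is free in the multi-copy setting, then swap along tree paths to concentrate Bell pairs at $R$ and teleport --- this is, in effect, the rigorous version of the sketch the paper itself uses inside the proof of Theorem~\ref{thm:4definitions,oneway}, and it is what Lemmas~\ref{lh2a} and~\ref{lh1a} formalize by a different fusion. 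What the paper's route buys extra is the identification of the GHZ state as a canonical maximal element of each class $\cO(S_i)$, which it reuses afterwards; what your route buys is independence from tensor rank. On the ``only if'' side you are in fact more careful than the paper, which dispatches it with ``from the above discussion'': your argument --- the source is a product across $S_a$ versus its complement, the target is entangled across that cut because a connected block cannot factorize so as to separate $A$ from $B$, and the set of states separable across a fixed bipartition is closed, so even the asymptotic fidelity-$1$ reading of MCLOCC cannot cross it --- is a faithful expansion of the paper's appeal to the invariance of complete independence under tensor powers, and the closedness step you insist on is genuinely needed given that $\succ_{MCLOCC}$ is the one asymptotic relation among the four.
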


The "if" part can be shown by the following lemmas.
Using the above theorem,
for any partition $\{S_1,\ldots, S_l\}$ of the set $\{A_1,\ldots, A_n\}$,
the set $\cO(S_1,\ldots, S_l)$ is an MCLOCC equivalent class
and characterized as follows.
\begin{align}
\cO(S_1,\ldots, S_l)=
\cO(S_1)\otimes \ldots \otimes \cO(S_l).\label{haeq1}
\end{align}
In fact, for $S_1=\{A_{n_1}, \ldots,A_{n_l}  \}$, the set $\cO(S_1)$
is equal to the MCLOCC equivalent class generated by
$\ket{\ghz_{d:A_{n_1}, \ldots,A_{n_l}}} :=\sum_{i=1}^d|i\ldots
i\rangle_{A_{n_1}, \ldots,A_{n_l}}$, where $d \ge 2$. Hence, in the
multipartite system $A_1,\ldots, A_n$, the state $\ket{\ghz_{d:A_1,
\ldots, A_n}}$ is maximal in the sense of MCLOCC.

Now, we prepare lemmas for our proof of "if" part of Theorem
\ref{hat1}.

\begin{lemma}
When the tensor rank of
the state pure state $\ket{\psi}$ on $A_1, \ldots, A_n$
is $d$,
it can be generated by SLOCC from
the GHZ state $\ket{\ghz_{d:A_1, \ldots, A_n}}$.
\end{lemma}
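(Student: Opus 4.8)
The plan is to prove the lemma directly from the definition of tensor rank by exhibiting the required local operators explicitly. Recall that the tensor rank of $\ket{\psi}$ being $d$ means, by definition, that $d$ is the least integer for which $\ket{\psi}$ admits a decomposition into a sum of $d$ product vectors. So first I would fix one such optimal decomposition,
$$\ket{\psi} = \sum_{i=1}^d \ket{a_{i,1}}_{A_1} \ox \ket{a_{i,2}}_{A_2} \ox \cdots \ox \ket{a_{i,n}}_{A_n},$$
where each $\ket{a_{i,k}}$ lives in the local space $\cH_k$ of party $A_k$.

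Second, for each party $A_k$ I would define the local matrix $M_k := \sum_{i=1}^d \ketbra{a_{i,k}}{i}$, regarded as a linear map from the $d$-dimensional space carrying the GHZ state into $\cH_k$. Since the $\ket{i}$ are orthonormal, this gives $M_k \ket{j} = \ket{a_{j,k}}$ for each $j \in \{1,\ldots,d\}$. Third, I would verify the claimed transformation by a one-line computation,
$$M_1 \ox \cdots \ox M_n \ket{\ghz_{d:A_1,\ldots,A_n}} = \sum_{i=1}^d M_1\ket{i} \ox \cdots \ox M_n\ket{i} = \sum_{i=1}^d \ket{a_{i,1}} \ox \cdots \ox \ket{a_{i,n}} = \ket{\psi},$$
which is exactly the statement $\ket{\ghz_{d:A_1,\ldots,A_n}} \succ_{SLOCC} \ket{\psi}$.

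The construction is elementary and I do not expect a genuine obstacle; the points needing care are rather bookkeeping. The definition of $\succ_{SLOCC}$ used in this paper requires only that local matrices $A_1,\ldots,A_n$ exist, with no demand that they be invertible or even square, so the possibly non-invertible and possibly rectangular maps $M_k$ (mapping a $d$-dimensional space into $\cH_k$) are admissible. One also genuinely uses a decomposition that achieves the tensor rank $d$: the GHZ state with exactly $d$ terms then suffices, and this is optimal in that no GHZ state with fewer than $d$ terms could generate $\ket{\psi}$ by local operators, since tensor rank is non-increasing under $\ox_k A_k$.

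Finally it is worth recording that this is a one-way SLOCC statement only. The reverse transformation $\ket{\psi} \succ_{SLOCC} \ket{\ghz_{d:A_1,\ldots,A_n}}$ need not hold, because $\ket{\psi}$ may have strictly smaller local ranks or other SLOCC invariants that obstruct recreating the full GHZ state; the lemma asserts precisely that every rank-$d$ tensor sits below the $d$-term GHZ state in the SLOCC order, consistent with its use in the proof of the ``if'' part of Theorem~\ref{hat1}.
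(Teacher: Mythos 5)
Your proposal is correct and follows essentially the same route as the paper's own proof: take a rank-$d$ product decomposition $\ket{\psi}=\sum_{i=1}^d \ket{a_{1,i}}\ox\cdots\ox\ket{a_{n,i}}$ and apply local operators mapping $\ket{i}\mapsto\ket{a_{i,k}}$ to the $d$-term GHZ state. The only difference is that you write the operators $M_k=\sum_{i=1}^d\ketbra{a_{i,k}}{i}$ explicitly where the paper merely asserts their existence, which is a harmless (indeed slightly more complete) presentation of the same argument.
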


\begin{proof}
There exist $n$ sets of non-normalized vectors
$\{\ket{a_{1,i}} \}_{i=1}^d, \cdots, \{\ket{a_{n,i}} \}_{i=1}^d$
such that $|\psi\rangle= \sum_{i=1}^d \ket{a_{1,i}} \ox \cdots \ox
\ket{a_{n,i}}$. So we can find the local operators
$A_i,i=1,\cdots,n$ such that $\bigox^n_{i=1}A_i\sum_{i=1}^d |i\cdots
i\rangle =|\psi\rangle$.
\end{proof}

Using this lemma,
for an arbitrary subset $S_i \subset \{A_1, \ldots, A_n \}$,
we can show that any state in $\cO(S_i)$ can be generated by MCLOCC
from the GHZ state of $S_i$.
So, it is sufficient to show that
the GHZ state of $S_i$ can be generated by MCLOCC from any state in $\cO(S_i)$.
In order to prove this argument, we prepare the following two lemmas.

\begin{lemma}\label{lh1a}
For given $m$ parties $A_1, \ldots, A_m$ and an integer $1\le n\le
m-1$, we denote the composite system $A_n$ and $A_{n+1}$ by $B$.
Then, the state $\ket{\ghz_{d:A_1, \ldots, A_{n-1},B,A_{n+2},\ldots,
A_{m} }}$ can be generated from $\ket{\ghz_{d:A_1, \ldots, A_{n}}}
\otimes \ket{\ghz_{d:A_{n+1},\ldots, A_{m}  }}$ by LOCC concerning
$A_1, \ldots, A_{n-1},B,A_{n+2},\ldots, A_{m}$.
\end{lemma}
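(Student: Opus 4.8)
The plan is to write the source state explicitly and then exploit the fact that, in the target partition, the single party $B$ holds \emph{both} the $A_n$ and the $A_{n+1}$ register, so it can perform a joint operation on them that neither $A_n$ nor $A_{n+1}$ could perform separately. With all basis labels understood modulo $d$, the source state reads
\[
\ket{\ghz_{d:A_1, \ldots, A_{n}}} \ox \ket{\ghz_{d:A_{n+1},\ldots, A_{m}}}
=\sum_{i=1}^d \sum_{j=1}^d \ket{i}_{A_1}\cdots\ket{i}_{A_{n}} \ket{j}_{A_{n+1}} \cdots\ket{j}_{A_m},
\]
so that $A_1,\ldots,A_{n-1}$ all carry the label $i$, the parties $A_{n+2},\ldots,A_m$ all carry $j$, and $B$ carries the pair $(i,j)$. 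The goal state $\ket{\ghz_{d:A_1,\ldots,A_{n-1},B,A_{n+2},\ldots,A_m}}$ is reached once these two labels are forced to agree.

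First I would have $B$ apply on its two registers the local controlled shift $U:\ket{i}_{A_n}\ket{j}_{A_{n+1}} \mapsto \ket{i}_{A_n}\ket{j-i}_{A_{n+1}}$, which is unitary because $j\mapsto j-i$ is a bijection for each fixed $i$. Setting $t:=j-i$, the state becomes $\sum_{i,t}\ket{i}_{A_1}\cdots\ket{i}_{A_n}\ket{t}_{A_{n+1}}\ket{i+t}_{A_{n+2}}\cdots\ket{i+t}_{A_m}$. Next $B$ measures the $A_{n+1}$ register in the computational basis; this yields a classical outcome $t$ together with a disentangled factor $\ket{t}_{A_{n+1}}$ that may be discarded, leaving $\sum_i \ket{i}_{A_1}\cdots\ket{i}_{A_n}\ket{i+t}_{A_{n+2}}\cdots\ket{i+t}_{A_m}$.

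Then $B$ broadcasts the outcome $t$ to $A_{n+2},\ldots,A_m$, each of which applies the inverse shift $\ket{i+t}\mapsto\ket{i}$; this classical-communication-plus-local-correction step is exactly what certifies the protocol as LOCC with respect to the partition $\{A_1,\ldots,A_{n-1},B,A_{n+2},\ldots,A_m\}$. The result is $\sum_i \ket{i}_{A_1}\cdots\ket{i}_{A_{n-1}}\ket{i}_B\ket{i}_{A_{n+2}}\cdots\ket{i}_{A_m}$, where $\ket{i}_B:=\ket{i}_{A_n}$ (with $A_{n+1}$ in a fixed state) labels a $d$-dimensional subspace of $\cH_{A_n}\ox\cH_{A_{n+1}}$; this is precisely $\ket{\ghz_{d:A_1,\ldots,A_{n-1},B,A_{n+2},\ldots,A_m}}$. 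Since every outcome $t$ can be corrected, the transformation is deterministic.

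The only place any subtlety hides, and the step I would check most carefully, is the legitimacy of the joint shift-and-measure on $A_nA_{n+1}$: it is an allowed \emph{local} operation precisely because $A_n$ and $A_{n+1}$ have been fused into the single party $B$ in the target partition, whereas it would be forbidden were they kept as two separate parties. I would also record explicitly the embedding $\ket{i}_B=\ket{i}_{A_n}\ket{0}_{A_{n+1}}$ that identifies the $d$-level GHZ on $B$ inside the $d^2$-dimensional space of $B$, so that no dimension mismatch arises between source and target.
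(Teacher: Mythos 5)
Your proof is correct and is essentially the paper's argument: your controlled shift $\ket{i}\ket{j}\mapsto\ket{i}\ket{j-i}$ followed by a computational-basis readout of the second register is exactly a coherent implementation of the paper's projective measurement $\{P_i\}$ onto the subspaces spanned by $\ket{1}\ket{1+i},\ldots,\ket{d}\ket{d+i}$, and the subsequent broadcast of the outcome with shift corrections on $A_{n+2},\ldots,A_m$ matches the paper's step verbatim. The only cosmetic slip is that the leftover register ends in $\ket{t}_{A_{n+1}}$ rather than $\ket{0}_{A_{n+1}}$, which $B$ fixes by a local unitary, so nothing is affected.
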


\begin{proof}
Let $P_i$ be the projection to the subspace spanned by
$\ket{1}\ket{1+i}, \ldots, \ket{d}\ket{d+i}$, where $d+i \equiv i (
\mod d)$. First, we apply the measurement $\{P_i\}_{i=1}^d$ on the
system $B$, which is the composite system $A_n$ and $A_{n+1}$. When
we obtain the outcome $i$, we apply the local unitary $\ket{j} \to
\ket{j-i}$ on the systems $A_{n+2},\ldots, A_{m}$. Then, we obtain
the GHZ state on $A_1, \ldots, A_{n-1},B,A_{n+2},\ldots, A_{m}$.
\end{proof}

\begin{lemma}\label{lh2a}
When $A_1$ and $A_2$ are not independent for a pure multipartite
state $\ket{\psi}$, $\ket{\Psi_{2:A_1,A_2}}:=
\sum_{i=1}^2 |ii\rangle_{A_1,A_2}$ can be generated by
SLOCC from $\ket{\psi}$. Hence, when $A_1$ and $A_2$ are not
completely independent for a pure multipartite state $\ket{\psi}$,
$\ket{\Psi_{2:A_1,A_2}}$ can be generated by SLOCC from $\ket{\psi}$.
\end{lemma}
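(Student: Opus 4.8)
\emph{Proof plan.} The target $\ket{\Psi_{2:A_1,A_2}}$ is supported on $A_1,A_2$ alone, so any SLOCC producing it acts on each of $A_3,\ldots,A_n$ by a rank-one map, i.e.\ effectively by a covector $\bra{\phi_j}$. The plan is therefore to realize the generation in two stages: first apply the product projection $\bra{\phi_3}\ox\cdots\ox\bra{\phi_n}$ to obtain a bipartite pure state $\ket{\chi}=\bra{\phi_3\cdots\phi_n}\psi\rangle$ on $A_1,A_2$, and then use invertible local operators on $A_1$ and $A_2$ to map $\ket{\chi}$ to $\ket{\Psi_{2:A_1,A_2}}$. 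The second stage succeeds exactly when $\ket{\chi}$ has Schmidt rank at least $2$ across $A_1|A_2$, since any such bipartite state is SLOCC-transformable to $\sum_{i=1}^2\ket{ii}$ (as in the $2\times2$ example of Sec.~\ref{sec:definition}). So it suffices to exhibit product vectors $\ket{\phi_3},\ldots,\ket{\phi_n}$ for which $\ket{\chi}$ is entangled.

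I would extract such a projection from the hypothesis by contraposition: the claim is that if \emph{every} product projection yields a product (Schmidt rank $\le1$) vector on $A_1,A_2$, then $\rho_{A_1A_2}=\rho_{A_1}\ox\rho_{A_2}$, contradicting that $A_1,A_2$ are not independent. Writing $\ket{\psi}=\sum_{\vec c}\ket{M_{\vec c}}_{A_1A_2}\ox\ket{\vec c}_{A_3\cdots A_n}$ in a product basis $\{\ket{\vec c}\}$, each $\ket{M_{\vec c}}$ corresponds to a matrix $M_{\vec c}$ on $A_1\ox A_2$, and a product projection sends $\ket{\psi}$ to $\sum_{\vec c}\overline{\phi_{\vec c}}\,M_{\vec c}$ with $\phi_{\vec c}=\prod_j\phi_{j,c_j}$. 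The useful observation is that if all of $\phi_3,\ldots,\phi_n$ are fixed except a single slot $\phi_j$, which is allowed to range over $\cH_j$, then the resulting family is \emph{linear} in $\phi_j$ and still consists of product projections; by hypothesis it is a linear space of matrices all of rank $\le1$. The classical structure theorem for such spaces then forces either a common left factor (a fixed vector on $A_1$) or a common right factor (a fixed vector on $A_2$).

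The remaining step, which I expect to be the main obstacle, is to glue these slotwise factorizations into a global one. For $n=3$ the single slot already exhausts $A_3\cdots A_n$, so the structure theorem applies directly and shows that $A_1$ (or $A_2$) occupies a fixed pure state, whence $\rho_{A_1A_2}$ factorizes immediately. For $n\ge4$ the product projections trace out only the Segre subset of $\cH_3\ox\cdots\ox\cH_n$, and one must check that the left/right type assigned to each slotwise line can be chosen consistently across all slots; tracking these factors, I expect the separable form $\rho_{A_1A_2}=\big(\sum_\alpha\proj{p_\alpha}\big)_{A_1}\ox\big(\sum_\beta\proj{q_\beta}\big)_{A_2}$ to emerge, which is precisely factorization. (One should be careful here: examples show that the full span of the $M_{\vec c}$ need not be a rank-one space even when $\rho_{A_1A_2}$ factorizes, so the conclusion must be read off the slotwise structure rather than from the global span.) This consistency argument is where the real work lies.

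Finally, the second statement follows at once. If $A_1$ and $A_2$ are not completely independent for $\ket{\psi}$, then by definition there is a state $\ket{\psi'}\in\cO_{SLOCC}(\ket{\psi})$ for which they are not independent. Applying the first part to $\ket{\psi'}$ produces $\ket{\Psi_{2:A_1,A_2}}$ by SLOCC, and since $\ket{\psi'}=\bigox_i A_i\ket{\psi}$ with the $A_i$ invertible, composing the two SLOCC protocols generates $\ket{\Psi_{2:A_1,A_2}}$ from $\ket{\psi}$, as claimed.
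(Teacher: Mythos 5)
Your overall route is the same as the paper's: remove the ancillary parties $A_3,\ldots,A_n$ by rank-one product projections so as to leave an entangled pure state on $A_1,A_2$, convert that state to $\ket{\Psi_{2:A_1,A_2}}$ by local matrices (any Schmidt-rank-$\ge 2$ bipartite pure state admits this), and handle the ``completely independent'' clause by passing to a state $\bigox_i A_i \ket{\psi}$ in $\cO_{SLOCC}(\ket{\psi})$ and composing the two SLOCC protocols. Those last two steps are correct and coincide with the paper's. The genuine gap is exactly where you flag it: for $n\ge 4$ you never establish that a suitable product projection exists. Your contrapositive formulation is the right target, your single-slot analysis (linearity in one $\phi_j$, the structure theorem forcing a common left or right factor on a linear space of rank-$\le 1$ matrices) is sound, and your cautionary example is apt --- but the left/right \emph{type} delivered by the structure theorem a priori depends on the frozen values of the other slots, and the consistency/gluing argument you defer with ``this is where the real work lies'' is not a routine verification; it is essentially the whole content of the lemma beyond the standard reductions. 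As written, your proposal proves the statement only for $n=3$.

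For comparison, the paper takes a slightly different reduction that sidesteps the multi-slot gluing: it removes one party at a time, asserting that for each $i\ne 1,2$ there is a rank-one projection $P$ on $A_i$ alone such that $A_1$ and $A_2$ remain non-independent for the projected $(n-1)$-partite pure state, and iterates down to an entangled bipartite pure state. Be aware that this per-party claim is \emph{not} the statement your structure theorem addresses: after projecting a single slot the state is still multipartite, so what must be preserved is non-factorization of the mixed reduced state $\rho_{A_1 A_2}$, not rank $\le 1$ of a vector contraction, and your single-slot rank argument does not directly substitute for it. (In fairness, the paper asserts this per-party existence without proof, so its argument is incomplete at the very same point; but your task was to supply the proof, and the burden of that step remains.) Two plausible ways to close your gap: prove the per-party step --- if $\bra{\phi}\psi\rangle$ has product $\rho_{A_1A_2}$ for every $\phi \in \cH_{A_i}$, then $\rho_{A_1A_2}(\psi)$ is itself product --- and induct on $n$; or carry out your gluing by noting that the product vectors $\phi$ whose contraction has Schmidt rank $\ge 2$ form the nonvanishing locus of a $2\times 2$ minor, a Zariski-open subset of the irreducible Segre variety, so that it suffices to rule out identical vanishing, which is where the slotwise type consistency must actually be argued rather than expected.
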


\begin{proof}
For any $i\ne 1,2$, there is a rank-one projection $P$ on the system
$A_i$ such that $A_1$ and $A_2$ are not independent for a pure
multipartite state $A_i \ket{\psi}$. Thus, the state $\ket{\psi}$
can be converted to a pure entangled state on $A_1$ and $A_2$ by
SLOCC. Using this state, we can generate $\ket{\ps_{2:A_1,A_2}}$ by
SLOCC.
\end{proof}

We choose an arbitrary pure state $\ket{\ps}$ in $\cO(S_i)$, where $S_i=\{A_{a_1}, \ldots, A_{a_k}\}$.
Using Lemma \ref{lh2a}, we can produce Bell states between two parties connected in the graph $G(\ket{\ps})$
by MCLOCC from $\ket{\ps}$.
Then, using Lemma \ref{lh1a}, we can produce the GHZ state
$\ket{\ghz_{2:A_{a_1}, \ldots, A_{a_k}}}$ from the above Bell states by LOCC.
Thus, we obtain the "if" part of Theorem \ref{hat1}.




\section{\label{sec:classificationSLOCC} classification of multipartite pure states under SLOCC}

In this section we study elementary step 2. Due to Eq.
(\ref{haeq1}), it is sufficient to consider the orbit decomposition
of the set $\cO(\{A_1, \ldots, A_n\})$ by SLOCC.

The problem has been widely studied in the past
decade~\cite{dvc00,ccm06,cds08,ccd10,cds10,bkm09} even in the
multipartite scenario. In particular, it has been pointed out that
deciding whether a GHZ state is SLOCC convertible into a
multipartite state is equal to the derivation of \emph{tensor rank}
\cite{cds08}.
This is the generalization of the Schmidt rank in
multipartite systems. Formally, for states in $N$-partite quantum
systems, each of which is described by a $d$-dimensional Hilbert
space $\h_i$ ($i=1,\ldots,N$), the tensor rank $\rk(\psi)$ of a
state $\ket{\psi} \in \bigotimes_{\alpha=1}^N \h_\alpha$, defined as
the smallest number of product states
$\{\bigotimes_{\alpha=1}^N\ket{\phi^\alpha_i}\}_{i=1...rk(\psi)}$
whose linear span contains $\ket{\psi}$.  The tensor rank has been
extensively studied in algebraic complexity theory
\cite{Kruskal-1977a, buergisser-book}. While it is easy to compute
for $N=2$ (Schmidt rank), even for $N=3$, determining the rank of a
state is NP-hard \cite{Haastad-1990a}.  This is one reason why SLOCC
convertibility in multipartite systems is so challenging.

It's well known that, classifying multipartite states in high
dimensions is difficult for the number of parameters increases
exponentionally with local ranks. Different from the traditional
way, we will use tensor rank to make the problem more tractable. The
classification with both tensor rank and local ranks provides a new
perspective on the structure of multipartite states under SLOCC. In
particular we will rely on the following result.
 \bl
 \label{le:fourinvariantsSLOCC}
 Two SLOCC-equivalent pure states on the multipartite system $A_1,\ldots, A_N$
have to have identical SLOCC invariants, $\rk, r_{A_1}, \ldots, r_{A_N}$.
So inequivalent SLOCC orbits have to have at least one
 non-identical invariant out of $\rk, r_{A_1}, \ldots, r_{A_N}$.
Besides, the $N+1$ SLOCC invariants are non-increasing under non-invertible SLOCC.
 \el

The lemma is obvious. Based on it, we classify multipartite pure
states into a few inequivalent SLOCC orbits and build a few theorems
to characterize them.
First of all we divide the orbit
$\cO(\{A_1, \ldots, A_n\})$
into a few \emph{principal} SLOCC-closed sets
in the light of only tensor rank
 \bea
 \label{ea:MCLOCCtoprincipal}
\cO(\{A_1, \ldots, A_n\}) = \bigcup_{d \ge 2} \cR(d),
 \eea
where $\cR(d)$ is
the set of states with tensor rank $d$ in $\cO(\{A_1, \ldots, A_n\})$,
which is closed under reversible SLOCC.
By using
Lemma~\ref{le:fourinvariantsSLOCC}, it suffices to consider only one
principal SLOCC orbit, namely $\cR(d)$ with constant
$d\ge2$. Taking into account other $N$ SLOCC invariants $r_{A_1}, \ldots, r_{A_N}$,
we can split this SLOCC-closed set into \emph{sub} SLOCC-closed sets
denoted as
 \bea
  \label{ea:princialtosub}
 \cR(d)
 &=&
\bigcup_{d\ge r_{A_1}, \ldots, r_{A_N} }  \cR(d\| r_{A_1}, \ldots, r_{A_N}) ,
 \eea
where $\cR(d\|r_{A_1}, \ldots, r_{A_N})$ is the set of states with
constant tensor rank $d$ and local ranks $r_{A_1}, \ldots, r_{A_N}$.
Evidently
there are many states fulfilling the property. In this sense we can
decompose the sub SLOCC-closed set into SLOCC orbits of
SLOCC-inequivalent states
 \bea
   \label{ea:subtoorbit}
 \cR(d\| r_{A_1}, \ldots, r_{A_N} )
  =
 \bigcup_j
 \cO_{SLOCC} \bigg(\sum^d_{i=1} \ket{a_{1,i}^j,\ldots, a_{N,i}^j} \bigg),
 \nonumber\\
 \eea
where the ranks of the spaces spanned by
$\{\ket{a_{1,i}^j}\},\ldots, \{\ket{a_{N,i}^j}\}$, and
$\{\ket{a_{1,i}^j,\ldots, a_{N,i}^j}\}$ are $r_{A_1},\ldots,r_{A_N}$, and $d$,
respectively.
In particular, when $r_{A_1}$ is $d$, the states
$\{\ket{a_{1,i}^j}\}$ can be restricted to the states $\{\ket{i}\}$.
This fact can be applied to the cases of $r_{A_j}=d$.

By using the three steps in
Eqs.~\ref{ea:MCLOCCtoprincipal},~\ref{ea:princialtosub} and
\ref{ea:subtoorbit}, we have given a clear hierarchy of fully
entangled multipartite states. In each level states from different
sets/orbits are SLOCC-inequivalent. From this hierarchy we see that
the third level Eq.~\ref{ea:subtoorbit}, namely splitting the sub
SLOCC-closed set into SLOCC orbits is a key step to classify
multipartite states in each principal SLOCC-closed set. This is the
topic of subsection~\ref{subsec:decomposition of
subSLOCCclosedsets}.

\subsection{\label{subsec:decomposition of subSLOCCclosedsets}
decomposition of sub SLOCC-closed sets}

We start by characterizing the simplest sub SLOCC-closed set in
$\cR(d)$. Note that we don't normalize the states
because it doesn't influence the equivalence and partial order under
SLOCC and LOCC.


 \bl
 \label{le:subSLOCCorbit,1}
For any pure state $|\psi\rangle$ with
$\rk(\psi)=r_{A_1}(\psi)=\cdots =r_{A_N}(\psi)=d$, there exists invertible
matrices $X_1,\ldots,X_N$ such that $X_1\otimes \cdots \otimes X_N
\sum_{i=1}^d |i\cdots i\rangle_{A_1\ldots A_N} =|\psi\rangle$.
So we have
 \bea
 \label{ea:subSLOCCorbit,1}
  \cR(d\|d,d,d)
  &=&
  \cO_{SLOCC} \bigg(\ket{{\ghz}_{d:A_1\ldots A_N}} \bigg).
 \eea
The $d$-dimensional GHZ state $\ket{{\ghz}_{d:A_1\ldots A_N}}$ is the
\emph{canonical state} of the orbit.
 \el
 \bpf
There exist $N$ sets of non-normalized linearly independent vectors
$\{\ket{a_{1,i}} \}_{i=1}^d, \cdots, \{\ket{a_{N,i}} \}_{i=1}^d$
such that $|\psi\rangle= \sum_{i=1}^d \ket{a_{1,i}} \ox \cdots \ox
\ket{a_{N,i}}$. So we can find the invertible local operators
$A_i,i=1,\cdots,N$ such that $\bigox^N_{i=1}A_i\sum_{i=1}^d |i\cdots
i\rangle =|\psi\rangle$.
 \epf

However, other sub SLOCC-closed subsets are not SLOCC orbits of
given states. To show this fact, we recall the classification of
$2\times3\times3$ states under SLOCC \cite{ccm06}.
 \bl
 There are six inequivalent states in the
 $2\times3\times3$ space under SLOCC
 \bea
 \left|\Psi_1\right\rangle &=& \left|000\right\rangle+\left|111\right\rangle+(\left|0\right\rangle+\left|1\right\rangle)\left|22\right\rangle,
 ~\rk=3,\\
 \left|\Psi_2\right\rangle &=& \left|010\right\rangle+\left|001\right\rangle+\left|112\right\rangle+\left|121\right\rangle,
 ~\rk=4,\\
 \left|\Psi_3\right\rangle &=& \left|000\right\rangle+\left|111\right\rangle+\left|022\right\rangle,
 ~\rk=3,\\
 \left|\Psi_4\right\rangle &=& \left|100\right\rangle+\left|010\right\rangle+\left|001\right\rangle+\left|112\right\rangle+\left|121\right\rangle,
 ~\rk=4,\nonumber\\
 \\
 \left|\Psi_5\right\rangle &=& \left|100\right\rangle+\left|010\right\rangle+\left|001\right\rangle+\left|022\right\rangle,
 ~\rk=4,\\
 \left|\Psi_6\right\rangle &=&
 \left|100\right\rangle+\left|010\right\rangle+\left|001\right\rangle+\left|122\right\rangle,
 ~\rk=4.
 \eea
 \el
So the $2\times3\times3$ state belongs to two principal SLOCC-closed
sets $ \cR(4) $ and $ \cR(3) $. More precisely, they
belong to only two sub SLOCC orbits respectively, i.e.,
 \bea
 \cR( 3\|2,3,3 )
 &=&
 \bigcup_{i=1,3}
 \cO_{SLOCC} \bigg(\ket{\Ps_i} \bigg),\label{Haya1}
\\
 \cR( 4\|2,3,3 )
 &=&
 \bigcup_{i=2,4,5,6}
 \cO_{SLOCC} \bigg(\ket{\Ps_i} \bigg).
 \eea

By removing the parameters with locally invertible operators, we can
generalize Eq. \ref{Haya1} to the $N$-partite states such as
 \bea
 && \cR( d\|d-1,d, \cdots, d )
 \nonumber\\
 =&&
 \bigcup_{j=1}^{d-1}
 \cO_{SLOCC}
 \bigg(
 \sum_{i=1}^{d-1}\ket{i}^{\ox N}
 +
 \sum_{i=1}^j \ket{i} \ket{d}^{\ox (N-1)}
 \bigg).
 \eea
Note that the $d-1$ states in the big parentheses are
SLOCC-inequivalent. The reason is that they can be written as
 \bea
 \ket{\ph_j}
 &=&
 \sum_{i=1}^{j} \ket{i} (\ket{i}^{\ox (N-1)} + \ket{d}^{\ox (N-1)})
 +
 \sum_{i=j+1}^{d-1} \ket{i}^{\ox N}
 \nonumber\\
 &=&
 \sum_{i=1}^{d-1} \ket{a_{i,j}} \ket{\ps_{i,j}},
 \eea
where $\ket{a_{i,j}} \in \cH_1$ and $\ket{\ps_{i,j}} \in
\bigox^{N}_{i=2} \cH_i$. Because the state in the range of $\r_{2,
\cdots, N}$ has the expression $\sum_{i=1}^{j} a_i (\ket{i}^{\ox
(N-1)} + \ket{d}^{\ox (N-1)}) + \sum_{i=j+1}^{d-1} a_i \ket{i}^{\ox
(N-1)}$, which is entangled if any coefficient $a_i \ne 0, i \in
[1,j]$. Hence there are at least $j$ entangled states
$\ket{\ps_{i,j}}$ in the state $\ket{\ph_j}$. Because entanglement
cannot be changed under locally invertible operators, the states
$\ket{\ph_j}$ cannot inter-convertible under SLOCC.

In general the decomposition of sub SLOCC-closed sets is not easy.
As a partial result, we can use the method in~\cite{ccm06,cmy10} to
classify $2\times M \times N$ states under SLOCC. Besides, tensor
rank is also computable effectively for this family of
states~\cite{jaja78,cmy10}. These results help carry out the
hierarchy in Eq. \ref{ea:MCLOCCtoprincipal}-\ref{ea:subtoorbit}.
First by following Eq.~\ref{ea:MCLOCCtoprincipal}, we distribute the
$2\times M \times N$ states into the principal SLOCC-closed sets
$\cR(d)$. Next by following Eq.~\ref{ea:princialtosub}, we
can decompose each $\cR(d)$ into sub SLOCC-closed sets to
which $2\times M \times N$ states belong. Third we can decide the
SLOCC orbits in each sub SLOCC-closed sets, which realizes
Eq.~\ref{ea:subtoorbit}.

One can similarly characterize $L \times M \times N (L \ge 3)$ pure
states. However the problem becomes complex as the dimension
increases, e.g., there are infinitely many inequivalent
$2\times4\times4$ and $3\times3\times3$ states under SLOCC
\cite{ccm06}. Actually it's been shown that the classification of
tripartite pure states is a NP-hard problem~\cite{cds08}. In this
context, it becomes quite important to figure out a general and
clear configuration of the orbit $\cO_{MCLOCC}^{ABC}$. This can be
done through the SLOCC partial order between the principal and sub
SLOCC-closed sets, which is the topic of next subsection.

\subsection{\label{subsec:partialorder,SLOCCorbits}
SLOCC partial order}

We have introduced three kinds of SLOCC classification for
multipartite states in previous paragraphs, namely the principal
SLOCC-closed set, the sub SLOCC-closed set, and the SLOCC orbit. To
get a clear configuration of the orbit $\cO(A_1, \ldots,A_N)$, we will
investigate the orbits successively embedded into each other, i.e.,
$  \cO_{SLOCC}(\ket{\ps} ) \subset \cR(d\|r_{A_1},\ldots, r_{A_N} )
\subset \cR(d) \subset \cO(A_1, \ldots,A_N). $ In other words,
the former orbit or set always belong to the same latter orbit or
set under SLOCC partial order, respectively. It will also help
reduce the question of rapidly increasing parameters in high
dimensions.

First, we study the partial order of different principal
SLOCC-closed set, $\cR(d_1)$ and $\cR(d_2),
~d_1>d_2$. There are a few states in the former which can be
converted into the latter via SLOCC. For example, we can readily
realize $\ket{{\ghz}_{d_1:A_1,\ldots, A_N}} \succ_{SLOCC} \ket{\ps}$ where $\forall
\ket{\ps} \in \cR(d_2)$. However for generic states, we
cannot carry out the transformation as above. The biggest reason is
that there are states in the SLOCC-closed set  $\cR(d_1)$,
which have smaller local rank than that of some states in the
SLOCC-closed set $\cR(d_2)$, and local ranks cannot be
increased under SLOCC as mentioned in
Lemma~\ref{le:fourinvariantsSLOCC}. For example, the states
$\ket{111}+\ket{122}+\ket{213}+\ket{224}$ and $\ket{{\ghz}_3}$ have
tensor rank four and three, respectively. However they are evidently
SLOCC incomparable. On the other hand, the transformation
$\cR(d_2) \succ_{SLOCC} \cR(d_1)$ is forbidden due
to the fact that tensor rank is non-increasing under SLOCC. So we
conclude
 \bt
 Generic states in different
 principal SLOCC-closed sets are SLOCC incomparable.
 That is, for $\ket{\ps_1} \in \cR(d_1)$ and $\ket{\ps_2} \in \cR(d_2),
 ~d_1>d_2$ we have $\ket{\ps_1} \ncong_{SLOCC} \ket{\ps_2}$. So we
 readily get the corollary that for generic states
 $\ket{\ps_1} \ncong_{LOCC} \ket{\ps_2}$.
 \et

Second, we study the partial order of different sub SLOCC-closed sets
from the principal SLOCC-closed set  $\cR(d)$.
We present the following result.
 \bt
 \label{thm:ghztoprincipal,partialorder=slocc}
The GHZ state with rank $d$ is maximal concerning SLOCC partial
order among pure states with tensor rank $d$. That is, the GHZ state
can be used to generate all states in the principal SLOCC-closed
sets $\cR(d)$ with some nonzero probability,
 \bea
 \ket{{\ghz}_{d:A_1,\ldots, A_N}}
 \succ_{SLOCC}
 \ket{\psi},~~ \forall \ket{\psi} \in \cR(d).
 \eea
 \et
 \bpf
 Similar to
 Lemma~\ref{le:subSLOCCorbit,1}.
 \epf

Third, we can easily indicate the partial order between SLOCC orbits
$\cO_{SLOCC}(\ket{\ps_i})$ that form a sub SLOCC-closed set
$\cR(d\| r_{A_1}, \ldots, r_{A_N})$. In the light of
Lemma~\ref{le:fourinvariantsSLOCC}, two SLOCC-inequivalent states
with identical four SLOCC invariants are SLOCC incomparable, and
thus also LOCC incomparable. In last subsection, we have introduced
the way to characterize SLOCC-inequivalent states in $2\times
M\times N$ space. More efforts are still required for high
dimensions.

As a short summary, in this section we have classified multipartite
states based on Lemma~\ref{le:fourinvariantsSLOCC}. In our hierarchy
of three levels, we decomposed the MCLOCC orbit $\cO_{MCLOCC}^{ABC}$
into principal SLOCC-closed sets, from each of which we have
proposed sub SLOCC-closed sets and decomposed them into inequivalent
SLOCC orbits. We also have given the partial order under SLOCC. Our
method is generally different from, and also greatly extends the
previous methods of classifying multipartite states
\cite{ccm06,ccd10}.

In next section, we will extend our results to the LOCC
classification and show the LOCC partial order between the orbits.

\section{\label{sec:classificationLOCC} LOCC partial order}

In this section we carry out elementary step 3 for LOCC partial
order. In the bipartite case, the state $\sum_{i=1}^d\ket{i,i}$ is
called the maxially entangled state because the state
$\sum_{i=1}^d\ket{i,i}$ is maximal concerning the LOCC partial order
among pure bipartite states with rank $d$. Intuitively, we might
expect that the state $\ket{\ghz_d}$ can generate any pure state
with tensor rank $d$ by LOCC in the multipartite system. However, it
is not true even in the tripartite case with $\rk=r_A=r_B=r_C=2$,
because the state
$\ket{0,0,0}+(\ket{0}+\ket{1})(\ket{0}+\ket{1})(\ket{0}+\ket{1})$
cannot be generated from the state $\ket{\ghz_2}$ by LOCC
\cite[Theorem 5]{tgp10}. This fact implies that there is no maximal
element concerning LOCC even among the SLOCC orbit $\cO_{SLOCC}
\bigg(\sum_{i=1}^d |iii\rangle \bigg)$.

In order to treat the multipartite extension of maximally entangled
states, we introduce a new class of multipartite pure states. A
multipartite pure state $\ket{\psi}$ is \emph{reduced separable}
when there exists a party, concerning which, the partial trace of
$\ket{\psi}$ is fully separable, namely the convex sum of fully
factorized states $\bigox_i \proj{a_i}$. In other words a reduced
separable pure state is the purification of a fully separable state.
To characterize the latter, we can use the \emph{cardinality}
$c(\r)$ which is defined as the minimal number of product pure
components required to construct the fully separable state $\r$
\cite{lockhart00,as10}
\begin{align}
  \label{al:cardinality}
  \r = \sum^{c(\r)}_{i=1} p_i \proj{a^{(1)}_i,a^{(2)}_i,\ldots,a^{(N-1)}_i}.
\end{align}
Let its purification be the state $\ket{\r}$ such that $\tr_N
\proj{\r}=\r$. 
For convenience, we also call the above number the cardinality of
the purification, i.e., $c(\r) = c(\ket{\r})$. Note that the tensor
rank of the GHZ state coincides with its cardinality, we obtain the
following theorem.


 \bt
\label{thm:ghztoghx,locc} In the $N$-partite system, the GHZ state
with rank $d$ can generate any reduced separable pure state with the
cardinality not bigger than $d$ by LOCC. That is, the GHZ state with
rank $d$ is maximal concerning LOCC among reduced separable pure
states with the largest cardinality $d$, i.e.,
 \bea
 \label{thm:ghztoghx,locc}
 \ket{{\ghz}_d}
 \succ_{LOCC}
 \sum^d_{i=1} \sqrt{p_i} \ket{a^{(1)}_i,a^{(2)}_i,\ldots,a^{(N-1)}_i ,i}.
 \eea
 \et
 \bpf
 First by using the complete POVM $\{\sum^d_{j=1} \sqrt{p_j} \ketbra{j}{j \oplus k}\}, k=1,\cdots, d$
 and local unitary operations, we can realize $\ket{{\ghz}_d}
 \succ_{LOCC} \sum^d_{i=1} \sqrt{p_i} \ket{i}^{\ox N}$. Second
 we consider the complete POVM on system $A_1$ such that
 $$P_k=\sum^d_{j=1} e^{\frac{2\pi I}{d}j k} \ketbra{a^{(1)}_j}{j} $$,
 which satisfy $\sum^d_{k=1} P_k^\dg P_k = I$. Then we get always one of the
 following states
 $$\sum^d_{j=1} e^{\frac{2\pi I}{d}j k} \sqrt{p_j} \ket{a^{(1)}_j} \ket{j}^{\ox (N-2)}
 \ket{j},~k=1, \cdots, N.$$ Starting from this state, we perform the
 complete POVM $\{ \sum^d_{j=1} e^{\frac{2\pi I}{d}j k}
 \ketbra{a^{(2)}_j}{j} \}$ on system $A_2$, to get the state
 $\sum^d_{j=1} e^{I \t_j} \sqrt{p_j} \ket{a^{(1)}_j,a^{(2)}_j} \ket{j}^{\ox (N-3)}
 \ket{j}.$ By using the iterative method we can finally get the
 state
 $$ \sum^d_{i=1} e^{I \t_j'} \sqrt{p_i} \ket{a^{(1)}_i,a^{(2)}_i,\ldots,a^{(N-1)}_i ,i},$$
 which can be recovered to the asserted state via a phase gate on
 system $A_N$. This completes the proof.
 \epf
One may similarly get the LOCC partial order for more tripartite
states, e.g., to transform $\ket{{\ghz}_d}$ into
$\sum^d_{i=1}\ket{a_i,b_i,c_i}$. Note that the partial order is
usually non-invertible, because two states are LOCC-interconvertible
if and only if they are unitarily equivalent \cite{bpr00}.


\section{\label{sec:conclusion} conclusions}

In this paper, we have built the concepts of MCLOCC and MCSLOCC and
showed their relations to the LOCC and SLOCC environment. These
environments form the criteria in our classification hierarchy for
multipartite states. In particular, We have classified the
SLOCC-equivlant states by using tensor rank which is a basic concept
from algebraic complexity. Besides, we have derived the SLOCC and
LOCC partial order of different sets and orbits from the hierarchy.
Our method is essentially different from the previous ones in
literature, which relies only on the local ranks. The results are
important for understanding the convertibility between multipartite
states. A further direction from this paper is to characterize the
conversion between states with identical tensor rank under LOCC.
This may give a better interpretation of tensor rank for studying
multipartite states.




\begin{thebibliography}{10}

\bibitem{bbc93} C. H. Bennett, G. Brassard, C. Crepeau, R. Jozsa, A. Peres, and W. Wootters, Phys. Rev. Lett.
{\bf 70}, 1895 (1993).

\bibitem{br01} H. J. Briegel and R. Raussendorf \prl {\bf86}, 910 (2001).

\bibitem{bbp96} C. H. Bennett, H. J. Bernstein, S. Popescu,
and B. Schumacher, Phys. Rev. A {\bf53}, 2046 (1996).

\bibitem{bpr00} Bennett C H, Popescu S, Rohrlich D, Smolin J A, and Thapliyal A V, \pra {\bf
63}, 012307 (2000).

\bibitem{dvc00} W. D\"ur, G. Vidal, and J. I. Cirac, Phys. Rev. A {\bf62}, 062314 (2000).

\bibitem{ghz89} D. M. Greenberger, M. Horne, and A. Zeilinger,
{\it Bell's Theorem, Quantum theory and Conceptions of the
Universe}, edited by M. Kafatos (Kluwer, 1989), p.69.

\bibitem{miyake03} A. Miyake, Phys. Rev. A {\bf67}, 012108 (2003).

\bibitem{ccm06} Lin Chen, Yi-Xin Chen and Yu-Xue Mei, Phys. Rev. A {\bf74}, 052331 (2006).

\bibitem{cmy10} E. Chitambar, C. A. Miller, and Yaoyun Shi, \jmp {\bf
51}, 072205 (2010).

\bibitem{bds96} C. H. Bennett, D. DiVincenzo, J. Smolin, and W. Wootters,
Phys. Rev. A {\bf 54}, 3824 (1996).

\bibitem{brs02} Note that this is different from the previous concept of multi-copy
transformation, which requires that the input and output states have
identical number of copies, i.e., $\ket{\ph}^{\ox n} \ra
\ket{\ph}^{\ox n}$. In particular, the nonasymptotic case has been
handled in the paper, S. Bandyopadhyay, V. Roychowdhury, and U. Sen,
\pra {\bf65}, 052315 (2002).

\bibitem{nielsen99} M. Nielsen, Phys. Rev. Lett. {\bf83}, 436
(1999).

\bibitem{eb01}  J. Eisert and H. J. Briegel, \pra {\bf64}, 022306 (2001).

\bibitem{ds09} Runyao Duan, Yaoyun Shi, quant-ph/0911.0879 (2009).

\bibitem{cds08} E. Chitambar, R. Duan, and Y. Shi, Phys. Rev. Lett. \textbf{101}, 140502 (2008).

\bibitem{ccd10} Lin Chen, Eric Chitambar, Runyao Duan, Zhengfeng Ji,
and Andreas Winter, quant-ph/1003.3059.


%


\bibitem{cds10} Eric Chitambar, Runyao Duan, and Yaoyun Shi, Phys. Rev. A {\bf81}, 052310 (2010).

\bibitem{bkm09} T. Bastin, S. Krins, P. Mathonet, M. Godefroid, L. Lamata, and E.
Solano, Phys. Rev. Lett. {\bf 103}, 070503 (2009).

\bibitem{Kruskal-1977a} J. B. Kruskal, Lin. Alg. Appl. \textbf{18}, 95 (1977).

\bibitem{buergisser-book} P. B\"urgisser, M. Clausen and M.A. Shokrollahi,
\emph{Algebraic Complexity Theory}, Springer-Verlag, Berlin, 1997.

\bibitem{Haastad-1990a} J. Haastad, J. Algorithms \textbf{11}, 644 (1990).

\bibitem{jaja78} J. Ja' Ja', STOC' 78: Proceedings of the Tenth Annual ACM
Symposium on Theory of Computing (ACM, New York, 1978), p. 173.




\bibitem{tgp10} S. Turgut, Y. Gul, and N. K. Pak, \pra {\bf 81}, 012317 (2010).

\bibitem{lockhart00} R. Lockhart, J. Math. Phys. {\bf41}, 6766 (2000).

\bibitem{as10} Erik Alfsen and
Fred Shultz, \jmp {\bf51}, 052201 (2010).







\end{thebibliography}
\end{document}